\documentclass[sigconf, 9pt, nonacm]{acmart}

\renewcommand\footnotetextcopyrightpermission[1]{}

\newtheorem{example}{Example}
\usepackage{makecell}
\usepackage{booktabs,tabularx,array}
\newcolumntype{Y}{>{\raggedright\arraybackslash}X}

\newcolumntype{N}[1]{>{\raggedright\arraybackslash}p{#1}}

\usepackage{algorithm}
\usepackage{algpseudocode}
\usepackage{amsmath,amssymb,amsfonts}
\usepackage{graphicx}
\usepackage{textcomp}
\usepackage{xcolor}
\usepackage{adjustbox}
\usepackage{multirow}
\usepackage{pifont}
\usepackage{makecell} 
\usepackage[most]{tcolorbox}
\usepackage{enumitem}
\usepackage{xspace}

\usepackage{pifont}
\newcommand{\xmark}{\ding{55}}

\usepackage{booktabs}
\usepackage{siunitx}
\usepackage[table]{xcolor} 
\usepackage{tikz}          
\usepackage{tcolorbox}
\tcbuselibrary{listings}
\newtcolorbox{promptbox}[2][]{%
  colback=blue!5,        
  colframe=blue!60,      
  colbacktitle=blue!20,  
  coltitle=blue!50!black,        
  fonttitle=\bfseries,   
  fontupper=\ttfamily\small,   
  boxrule=0.6pt,         
  arc=2mm,               
  top=1mm, bottom=1mm,
  left=1mm, right=1mm,
  title=#2,#1
}

\begin{document}

\title{ Security Analysis for SCONE Logic Locking }

\author{Akashdeep Saha$^\dagger$, Mohammed Nabeel$^\ddagger$, Johann Knechtel$^\dagger$, Michail Maniatakos$^\dagger$, Ozgur Sinanoglu$^\dagger$ }
\affiliation{%
  \institution{$^\dagger$New York University Abu Dhabi, Abu Dhabi}
  \country{United Arab Emirates} \\
  \institution{$^\ddagger$New York University Tandon School of Engineering, New York}
  \country{United States}
}

\email{{as19360, mtn2, johann, michail.maniatakos, ozgursin}@nyu.edu}

\begin{abstract}
SCONE [DAC'25] expands a logic locking interface with additional encoded
inputs derived from the original primary inputs, and admits two
realizations: a \textit{with-ES} variant, where the critical encoding stage is
implemented in hardware, and a \textit{without-ES} variant, where the locked
design directly exposes an encoded interface of width $n+m$. We show
that both realizations are vulnerable, but for different reasons.
For the without-ES variant, we prove that, when the added encoded
inputs are deterministic linear functions of the original inputs, the
valid encoded-input space remains $n$-dimensional despite the nominal
expansion to $n+m$ inputs. Hence, the widened interface 
does not yield $m$ additional or independent brute-force dimensions.
For the with-ES variant, we present a polynomial-time white-box attack
that exactly recovers the added-input count and the implemented
linear encoding relation from the locked netlist, achieving 100\%
recovery over all evaluated instances. We also develop a black-box
procedure that certifies the same dimensionality collapse from valid
encoded-input samples without reconstructing the hidden encoder.
Experiments on ISCAS-85 and ITC-99 benchmarks validate both results,
and we further demonstrate exact white-box recovery on an ARM
Cortex-M0 RTL benchmark. Finally, we propose a lightweight
non-linear mitigation and show that it does not exhibit the
vulnerabilities identified in this paper under all representative
attack sets considered in SCONE.
\end{abstract}

\maketitle

\section{Introduction}
Logic locking (LL) is a widely studied hardware security technique for protecting integrated circuits against a variety of malicious activities in an untrusted supply-chain setting. Over the last decade, LL research has shifted toward SAT-resilient constructions, motivated by the vulnerability of conventional schemes to oracle-guided deobfuscation. Among these, SFLL-style schemes~\cite{yasin2017provably,sengupta2020truly, saha2025midas} have
been especially influential because they combine functional
corruption with restore logic to harden the design against classical
SAT~\cite{subramanyan2015evaluating} and removal attacks~\cite{yasin2017removal}.

At the same time, a central lesson from the LL literature is that resistance to SAT-based attacks does not imply resistance to implementation-level structural and functional analysis~\cite{han2021does,sirone2020functional,limaye2022valkyrie,alrahis2021gnnunlock}. A sequence of white-box, functional, and learning-based attacks has shown that protection logic often leaves exploitable traces in the synthesized netlist. This gap between construction-level rationale and implementation-level behavior is particularly important for
recent provably secure logic locking schemes.

SCONE~\cite{han2025scone} is a recent SFLL-style logic locking scheme that augments a circuit with additional encoded inputs derived from the original primary inputs. It admits two realizations. In the \emph{with-ES}
realization, the encoding stage (ES) is implemented in hardware and remains part of the locked netlist. In the \emph{without-ES} realization, the encoding is applied outside the locked design, and the netlist directly exposes an encoded interface of width $n+m$. These two realizations lead to different security questions and, as we show in this paper, to different vulnerabilities.

For the \emph{without-ES} realization, SCONE's security rationale is
tied to the nominal expansion of the visible interface from $n$ to
$n+m$ inputs. We show that this argument does not hold when the added
encoded inputs are deterministic linear functions of the original
inputs: the valid encoded-input set remains confined to an
$n$-dimensional subspace of the nominal $(n+m)$-dimensional space
interface. Thus, the visible interface expansion does not provide
$m$ additional independent brute-force dimensions. For the
\emph{with-ES} realization, the encoder itself becomes a white-box target. Because the added encoded inputs are implemented as deterministic linear functions of the original primary inputs, the planted encoding relation can be recovered exactly from the locked netlist in polynomial time. Hence, the hardware-implemented ES introduces a constructive structural vulnerability rather than a meaningful security gain.

Beyond the vulnerability analysis, we also propose a lightweight
mitigation that preserves the overall SCONE design pattern while
removing the specific linear structure exploited by our attacks. The key idea is to replace the linear ES with a non-linear
one, thereby preventing exact recovery as a single matrix over
$\mathbb{F}_2$. We evaluate this modified construction against the
same representative structural and I/O-based attacks considered in
the SCONE setting~\cite{han2025scone}, including the vulnerability
identified in this paper, and do not observe these attacks on the
resulting instances.

We validate the proposed analyses on ISCAS-85 and ITC-99 benchmarks, where the white-box attack achieves exact recovery across all evaluated with-ES instances, and the black-box experiments
empirically confirm dimensionality collapse in the without-ES
setting. We further evaluate white-box recovery on an ARM Cortex-M0
RTL benchmark, demonstrating exact recovery of the added encoded
inputs on a substantially larger real-world design.

The main contributions of this work are as follows:
\begin{itemize}
\item We present a two-sided security analysis of SCONE and its
two variants: an exact polynomial-time white-box recovery for with-ES realization and a black-box dimensional refutation of entropy argument for the without-ES variant.

\item We validate both analyses experimentally on standard benchmarks and on an ARM Cortex-M0 RTL benchmark.

\item We propose a non-linear mitigation and show, head-to-head
against the representative attack set considered in the SCONE
context, that the modified construction does not exhibit the
vulnerability identified in this paper.

\end{itemize}

The rest of this paper is organized as follows. Section~\ref{sec:bg}
reviews SCONE, representative logic locking attacks, and defenses. Section~\ref{subsec:entropy_vuln} presents the vulnerability analysis for both the with-ES and the without-ES settings. Section~\ref{sec:Exp} reports the experimental validation and empirical estimates. Section~\ref{sec:Dis} discusses a brief mitigation direction, and we conclude in Section~\ref{sec:Conclu}.

\section{Background and Related Work}
\label{sec:bg}
Following the SAT attack~\cite{subramanyan2015evaluating}, research in LL shifted from
conventional schemes toward SAT-resilient families, such as
\cite{yasin2016sarlock,xie2018anti,Yasin2017SFLL,Shakya2020CASLock,saha2020lopher}, and other provably secure LL techniques. Subsequent work, however, showed that resistance to SAT-based attacks does not by itself imply resistance to implementation-level structural and functional analysis~\cite{sirone2020functional,roy2026nilopher,saha2022dip}. Recent studies have increasingly leveraged large language models to support benchmarking and automation in logic locking workflows~\cite{saha2025gllamor,saha2025lockforge,basu2026poster}.

SCONE is a LL scheme~\cite{han2025scone} that augments an
$n$-input circuit with $m$ additional encoded inputs, yielding an
encoded interface of size $n+m$. In their proposed linear construction, 
each added encoded input is an XOR-based function of the
original primary inputs over $\mathbb{F}_2$. Figure~\ref{fig:scone_diag}
illustrates the two SCONE realizations proposed in~\cite{han2025scone}. In Figure~\ref{fig:scone_diag}(a), the \emph{with-ES} realization, the ES is implemented in hardware, and the locked design retains the original $n$-input interface. The ES maps the original inputs to an encoded interface, which is then processed by the locked datapath, including $FSC_{\textit{encoded}}$ and the restore unit, whose outputs are combined to produce the locked output. In Figure~\ref{fig:scone_diag}(b), the \emph{without-ES} realization, the encoding is applied outside the
locked design. Hence, the netlist directly exposes the encoded interface of width $n+m$, while the ES itself is not present in hardware. Consequently, the apparent input space expands from $2^n$ to $2^{n+m}$ only in the without-ES realization.

A major lesson from the literature is that SAT resilience does not
imply resilience to structural analysis. FALL~\cite{sirone2020functional}
showed that structural and functional artifacts in locked circuits can
be exploited to localize protection-relevant nodes and recover
locking information directly from the locked netlist, often without oracle access. Valkyrie~\cite{limaye2022valkyrie} later systematized this perspective for PSLL by diagnosing implementation-level vulnerabilities and isolating the logic associated with the locking mechanism. Likewise, attacks on CAS-Lock and its variants~\cite{sengupta2021breaking,saha2022dip} showed that hardware-level structural traces can invalidate abstract security claims, while KRATT~\cite{aksoy2024kratt} further reinforced the practicality of removal-and-structural analysis against SAT-resilient locking. Likewise, learning-based structural exploitation has shown that locking-specific logic can also be identified directly from netlist structure using data-driven models, enabling oracle-less analysis and removal of protection logic~\cite{chakraborty2018sail,alrahis2021gnnunlock,kamali2022LLadvances}.

These observations are directly relevant to the assumptions used in this paper. Structural attacks on LL do not treat the locked circuit monolithically; instead, they first localize the protection-relevant region and then analyze the resulting reduced logic. Our white-box analysis of SCONE follows the same principle: we first identify candidate ES-output wires from the SCONE-locked netlist and then analyze the corresponding reduced structure. Thus, the attack model adopted here is consistent with established structural-analysis practice in LL~\cite{limaye2022valkyrie,sirone2020functional}. Next, we elaborate on the proposed exploitations.

\begin{figure}[tb]
  \centering
  \includegraphics[width=0.65\columnwidth]{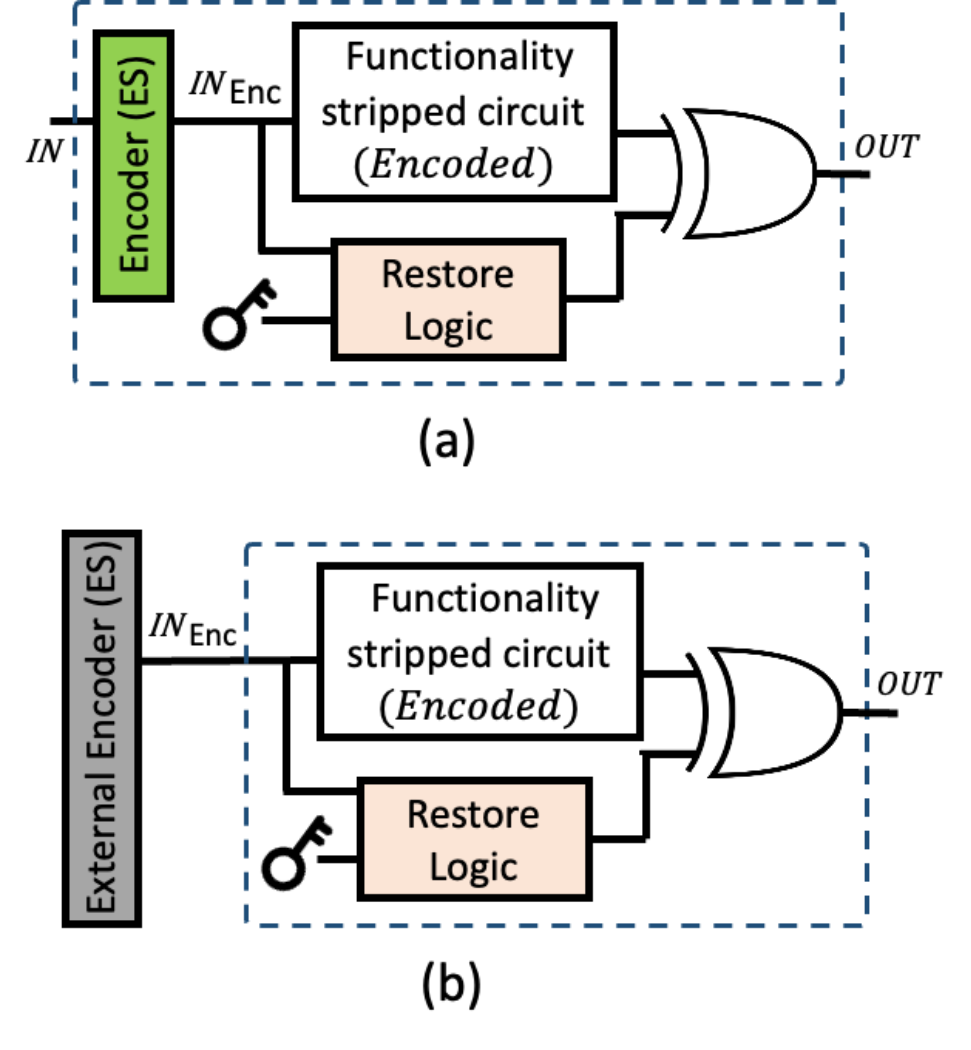}
  \caption{SCONE-protected designs: (a) with encoding realized in the hardware itself, (b) with external encoding.}
  \label{fig:scone_diag}
\end{figure}

\section{Entropy-Reduction Vulnerability of SCONE}
\label{subsec:entropy_vuln}

Let the original circuit have $n$ primary inputs
\begin{equation}
\mathbf{x} \in \mathbb{F}_2^n .
\end{equation}
SCONE enlarges the apparent interface by introducing $m$ additional encoded inputs, yielding an encoded vector
\begin{equation}
\mathbf{z} \in \mathbb{F}_2^{n+m}.
\end{equation}
The security intuition behind SCONE is that this expansion increases brute-force complexity from $2^n$ to $2^{n+m}$. Our analysis shows that this interpretation is not generally valid when the additional inputs are deterministic functions of the original inputs.

Assume that the added encoded inputs are represented by
\begin{equation}
\mathbf{e} \in \mathbb{F}_2^m
\end{equation}
and are generated by a linear encoding rule
\begin{equation}
\mathbf{e} = A \mathbf{x},
\qquad
A \in \mathbb{F}_2^{m \times n}.
\label{eq:lin_encoding}
\end{equation}
Then every valid encoded input vector is of the form
\begin{equation}
\mathbf{z}
=
\begin{bmatrix}
\mathbf{x} \\
A\mathbf{x}
\end{bmatrix}.
\label{eq:encoded_vector}
\end{equation}
Accordingly, the set of valid encoded inputs is
\begin{equation}
\mathcal{C}
=
\left\{
\begin{bmatrix}
\mathbf{x} \\
A\mathbf{x}
\end{bmatrix}
:
\mathbf{x} \in \mathbb{F}_2^n
\right\}
\subseteq \mathbb{F}_2^{n+m}.
\label{eq:valid_code_space}
\end{equation}

\begin{proposition}
\label{prop:dim_c}
The valid encoded-input set $\mathcal{C}$ is an $n$-dimensional subspace of the ambient $(n+m)$-dimensional space $\mathbb{F}_2^{n+m}$.
\end{proposition}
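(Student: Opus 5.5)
The plan is to realize $\mathcal{C}$ as the image of an injective linear map and then read off its dimension by rank--nullity. Define
\[
\phi:\mathbb{F}_2^n\to\mathbb{F}_2^{n+m},\qquad \phi(\mathbf{x})=\begin{bmatrix}\mathbf{x}\\ A\mathbf{x}\end{bmatrix}=G\mathbf{x},\qquad G=\begin{bmatrix} I_n\\ A\end{bmatrix}\in\mathbb{F}_2^{(n+m)\times n}.
\]
By \eqref{eq:valid_code_space}, $\mathcal{C}=\phi(\mathbb{F}_2^n)$ is exactly the column space of $G$.

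\textbf{Subspace.} The map $\phi$ is linear over $\mathbb{F}_2$, because matrix multiplication by $G$ is linear. The image of a linear map is a subspace, so $\mathcal{C}$ is a subspace of $\mathbb{F}_2^{n+m}$. One can also check directly that $\mathbf{0}=\phi(\mathbf{0})\in\mathcal{C}$ and $\phi(\mathbf{x})+\phi(\mathbf{y})=\phi(\mathbf{x}+\mathbf{y})$. Over $\mathbb{F}_2$, closure under addition suffices, since the only scalars are $0$ and $1$.

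\textbf{Dimension.} The key step is injectivity of $\phi$. If $\phi(\mathbf{x})=\mathbf{0}$, then the first $n$ coordinates already give $\mathbf{x}=\mathbf{0}$, so $\ker\phi=\{\mathbf{0}\}$. Equivalently, the identity block $I_n$ makes the $n$ columns of $G$ linearly independent, so $\operatorname{rank}G=n$. Rank--nullity then gives $\dim\mathcal{C}=n-\dim\ker\phi=n$. As a sanity check, this also gives $|\mathcal{C}|=2^n$: $\phi$ is a bijection onto $\mathcal{C}$, and projecting onto the first $n$ coordinates recovers $\mathbf{x}$.

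There is no real obstacle here. The only point needing care is that the conclusion holds for every $A\in\mathbb{F}_2^{m\times n}$, including rank-deficient or zero $A$. This is because injectivity comes entirely from the systematic $I_n$ block and not from any property of $A$.
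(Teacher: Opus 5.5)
Your proposal is correct and follows the same route as the paper: both realize $\mathcal{C}$ as the image of the linear map $\mathbf{x}\mapsto(\mathbf{x},A\mathbf{x})$, get injectivity from the first $n$ coordinates being $\mathbf{x}$ itself, and conclude that the image has dimension $n$. Your explicit check that the image is a subspace, and your remark that the argument holds for every $A$ (including rank-deficient or zero $A$), are small clarifications the paper leaves implicit.
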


\begin{proof}
Consider the linear map
\begin{equation}
E : \mathbb{F}_2^n \rightarrow \mathbb{F}_2^{n+m},
\qquad
E(\mathbf{x})
=
\begin{bmatrix}
\mathbf{x} \\
A\mathbf{x}
\end{bmatrix}.
\end{equation}
The first $n$ coordinates of $E(\mathbf{x})$ are exactly $\mathbf{x}$. Hence, if
\begin{equation}
E(\mathbf{x}_1)=E(\mathbf{x}_2),
\end{equation}
then $\mathbf{x}_1=\mathbf{x}_2$, and $E$ is injective. Since $E$ is linear and injective on an $n$-dimensional domain, its image has dimension $n$. Because $\mathcal{C}=\operatorname{Im}(E)$, the result follows.
\end{proof}

Proposition~\ref{prop:dim_c} implies that the number of valid encoded vectors is
$|\mathcal{C}| = 2^n,$ not $2^{n+m}$. Therefore, the additional $m$ coordinates increase the nominal interface width, but they do not introduce $m$ additional independent degrees of freedom.

The same conclusion follows from information theory. Since $\mathbf{e}$ is a deterministic function of $\mathbf{x}$,
\begin{equation}
H(\mathbf{e}\mid \mathbf{x}) = 0,
\end{equation}
and therefore
\begin{equation}
H(\mathbf{x},\mathbf{e})
=
H(\mathbf{x}) + H(\mathbf{e}\mid\mathbf{x})
=
H(\mathbf{x}).
\label{eq:joint_entropy}
\end{equation}
where $H$ denotes Shannon entropy. Thus, the encoded interface does not carry more entropy than the original $n$-bit input space.

\subsection{White-Box Exact Recovery for With-ES Variant}
\label{subsec:whitebox_1a}

In this SCONE variant implemented in hardware, additional encoded inputs are explicitly realized inside the locked netlist. Here, the encoding matrix $A$ can be recovered exactly from the implemented logic. Each added input satisfies an equation of the form
\begin{equation}
e_i = \bigoplus_{j=1}^{n} A_{ij}x_j,
\qquad i \in \{1,\dots,m\},
\label{eq:extra_eq}
\end{equation}
which yields the binary matrix
\begin{equation}
A \in \mathbb{F}_2^{m \times n}.
\end{equation}
Applying Gaussian elimination over $\mathbb{F}_2$ to $A$ reveals the rank of the ES and identifies redundant derived-input equations.

The key observation is that even when all $m$ rows of $A$ are linearly independent, the effective dimension of the valid encoded-input set remains $n$ by Proposition~\ref{prop:dim_c}. Hence, the added inputs are linearly independent only as \emph{functions} of $\mathbf{x}$; they are not independent entropy-bearing interface variables. This yields a constructive white-box vulnerability: the advertised input-space expansion is algebraically reducible to the original $n$-dimensional domain.

\begin{figure*}[h]
    \centering
\includegraphics[height=6cm,keepaspectratio]{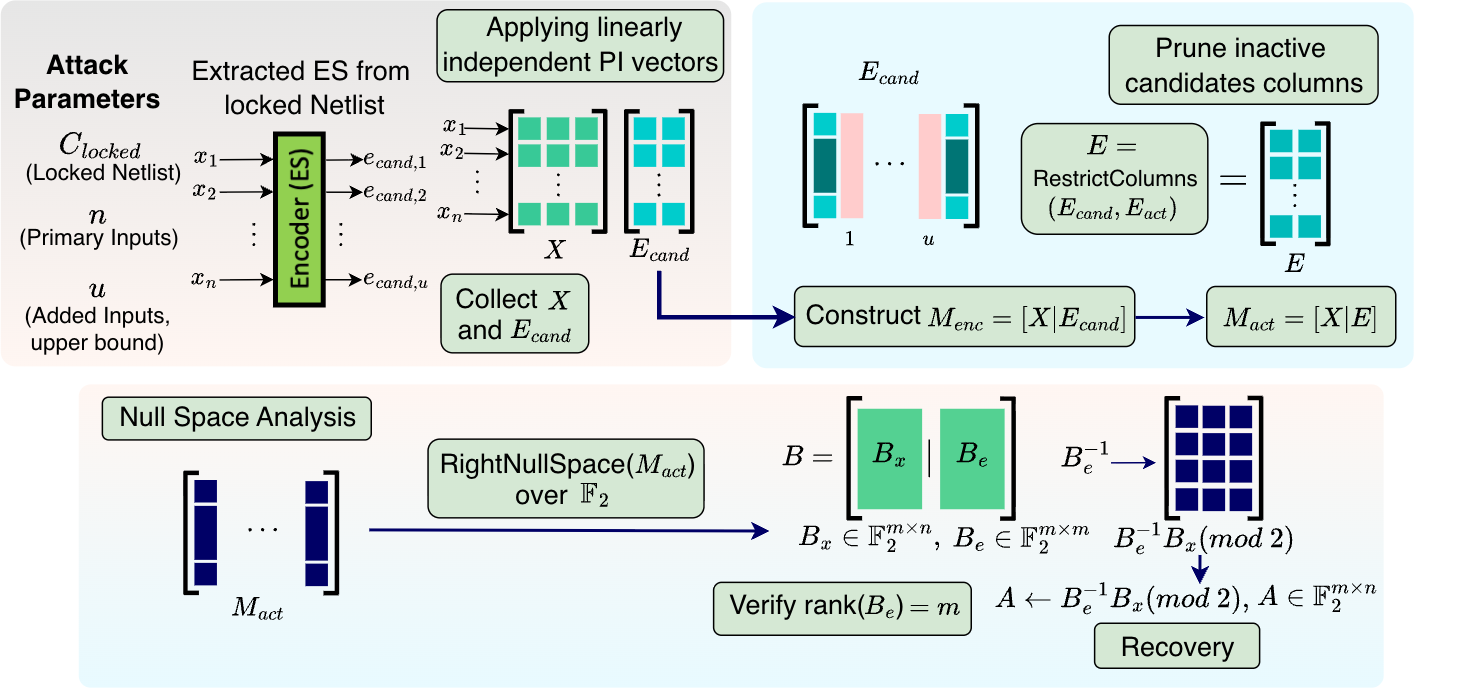}
    \caption{Illustration of the proposed white-box recovery flow for the with-ES SCONE setting. Starting from the locked netlist, the attacker probes candidate encoder outputs under linearly independent primary-input assignments, prunes inactive candidates, and algebraically recovers the implemented linear encoding relation.}
    \label{fig:attack1}
\end{figure*}

\begin{example}
Consider a toy SCONE-style instance with original primary inputs
\[
x=
\begin{bmatrix}
x_1\\
x_2\\
x_3
\end{bmatrix}
\in \mathbb{F}_2^3
\]
and added encoded inputs
\[
e=
\begin{bmatrix}
e_1\\
e_2
\end{bmatrix}
\in \mathbb{F}_2^2.
\]
Assume that the ES is implemented in hardware and the with-ES setting realizes
\[
e_1=x_1\oplus x_2,\qquad e_2=x_2\oplus x_3.
\]
In the white-box setting, consider the following three linearly independent probe assignments over the original PI space
\[
x^{(1)}=
\begin{bmatrix}
1\\
1\\
0
\end{bmatrix},
\qquad
x^{(2)}=
\begin{bmatrix}
1\\
0\\
0
\end{bmatrix},
\qquad
x^{(3)}=
\begin{bmatrix}
0\\
1\\
1
\end{bmatrix}.
\]
Stacking these probes as rows gives
\[
X=
\begin{bmatrix}
1 & 1 & 0\\
1 & 0 & 0\\
0 & 1 & 1
\end{bmatrix}.
\]

Evaluating the implemented ES on these probes yields
\[
e(x^{(1)})=
\begin{bmatrix}
0\\
1
\end{bmatrix},
\qquad
e(x^{(2)})=
\begin{bmatrix}
1\\
0
\end{bmatrix},
\qquad
e(x^{(3)})=
\begin{bmatrix}
1\\
0
\end{bmatrix}.
\]
Stacking the observed added-input values as rows gives
\[
E=
\begin{bmatrix}
0 & 1\\
1 & 0\\
1 & 0
\end{bmatrix}.
\]
The corresponding encoded sample matrix is
\[
M_{\mathrm{enc}}=
\left[
\begin{array}{c|c}
X & E
\end{array}
\right]
=
\begin{bmatrix}
1 & 1 & 0 & 0 & 1\\
1 & 0 & 0 & 1 & 0\\
0 & 1 & 1 & 1 & 0
\end{bmatrix}.
\]
Since $X$ is full-rank over $\mathbb{F}_2$, the valid encoded samples span a
$3$-dimensional subspace of the ambient space $\mathbb{F}_2^{3+2}$, i.e.,
\[
\operatorname{rank}(M_{\mathrm{enc}})=3.
\]

Moreover, the added-input relations can be recovered exactly from
\[
E = XA^T.
\]
Because
\[
X^{-1}=
\begin{bmatrix}
0 & 1 & 0\\
1 & 1 & 0\\
1 & 1 & 1
\end{bmatrix},
\]
we obtain
\[
A^T = X^{-1}E=
\begin{bmatrix}
1 & 0\\
1 & 1\\
0 & 1
\end{bmatrix},
\]
and therefore
\[
A=
\begin{bmatrix}
1 & 1 & 0\\
0 & 1 & 1
\end{bmatrix}.
\]
Hence,
\[
e=Ax,
\qquad
e_1=x_1\oplus x_2,
\qquad
e_2=x_2\oplus x_3.
\]

\end{example}
%
%
Figure~\ref{fig:attack1} summarizes the recovery flow from the
locked design to the active encoder structure. Once inactive
candidates are pruned, the remaining linear relations are sufficient
to recover the implemented SCONE encoding exactly.
Algorithm~\ref{alg:matrix_extraction} formalizes the white-box recovery flow for
the with-ES setting. The attack starts from the locked netlist
$C_{\mathrm{locked}}$, where the original PI count $n$ is known, whereas the
exact number of added inputs is not assumed a priori. Accordingly,
\textsc{IdentifyEncodedOutputCandidates} first extracts an upper-bounded set
$\mathcal{E}_{\mathrm{cand}}$ of wires that correspond to ES outputs.
This is a straightforward task for white-box structrual analysis of the netlist, where PIs feed into the ES and ES outputs feed into both the functionally stripped circuit and the restore logic
(Fig.~\ref{fig:scone_diag}).
Next, \textsc{ChooseProbeAssignment} selects PI vectors $\mathbf{x}$ such that
the sampled PI matrix $X$ remains full-rank as rows are accumulated. For each
accepted probe, \textsc{EvaluateCandidateESOutputs} evaluates the candidate ES
outputs and forms a valid encoded sample
$\mathbf{z}=[\,\mathbf{x}\parallel\mathbf{e}_{\mathrm{cand}}\,]$. Sampling
continues until $\operatorname{rank}(X)=n$, at which point the collected probes
span the original PI space.

Because only an upper bound on the added-input count is assumed initially, the
matrix $E_{\mathrm{cand}}$ may contain inactive columns. These are pruned by
retaining only candidate outputs that exhibit nonzero activity over the
collected probes, yielding the recovered active set
$\mathcal{E}_{\mathrm{act}}$. Its cardinality gives the recovered added-input
count,
\[
m = |\mathcal{E}_{\mathrm{act}}|.
\]
The reduced sample matrix is then written as
\[
M_{\mathrm{act}}=[\,X\mid E\,].
\]

Finally, \textsc{RightNullSpace} computes the right null space of
$M_{\mathrm{act}}$ over $\mathbb{F}_2$. Writing a row-basis of this null space
as $B=[\,B_x\mid B_e\,]$, the valid encoded samples satisfy
\[
B_x = B_e A,
\]
and therefore
\[
A = B_e^{-1}B_x \pmod 2.
\]
Hence, the algorithm recovers both the exact added-input count $m$ and the
linear relation of the extra inputs,
\[
\mathbf{e}=A\mathbf{x},
\]
thereby recovering the linear structure implemented by ES.

\subsubsection{Algorithm~1 Complexity}
Algorithm~1 is polynomial-time in the with-ES white-box setting. The
probe-collection loop does not enumerate the full $2^n$ original-input space;
it stops once the sampled PI matrix $X$ reaches rank $n$, i.e., once a
full-rank probe set has been collected. Thus, the attack requires only enough
linearly independent probes to span the original PI space.

Let $u$ denote the upper bound on the number of candidate encoded outputs and
let $m \le u$ be the recovered active encoded-input count. If
$T_{\mathrm{cand}}$ is the cost of identifying candidate encoded outputs and
$T_{\mathrm{eval}}$ is the cost of evaluating them on one accepted probe, then
the total runtime is
\[
T_{\mathrm{Alg1}}
=
T_{\mathrm{cand}}
+
O(n\,T_{\mathrm{eval}})
+
O\!\big(n (n+m)^2\big)
+
O(m^3 + m^2 n).
\]
Here, $O(n\,T_{\mathrm{eval}})$ accounts for evaluating the candidate ES outputs
on at most $n$ accepted probes, $O\!\big(n (n+m)^2\big)$ is the cost of Gaussian
elimination and right-null-space recovery on the reduced sample matrix, and
$O(m^3 + m^2 n)$ is the cost of inverting $B_e$ and reconstructing
$A = B_e^{-1}B_x$ over $\mathbb{F}_2$. Hence, in the with-ES setting,
recovering the implemented relation $\mathbf{e}=A\mathbf{x}$ is polynomial in $n$, $m$, and $u$.

\begin{algorithm}[t]
\caption{White-Box Recovery of the SCONE Encoding Matrix with Unknown Added-Input Count}
\label{alg:matrix_extraction}
\begin{algorithmic}[1]
\Require Locked netlist $C_{\mathrm{locked}}$ with ES implemented in hardware, known original PI count $n$, upper bound $u$ on the number of added encoded inputs
\Ensure Recovered active encoded-output set $\mathcal{E}_{\mathrm{act}}$ and encoding matrix $A \in \mathbb{F}_2^{m \times n}$ for some recovered $m \le u$

\Comment{\textbf{Phase 0: Structural identification of candidate encoded outputs}}
\State $\mathcal{E}_{\mathrm{cand}} \leftarrow \textsc{IdentifyEncodedOutputCandidates}(C_{\mathrm{locked}},u)$
\Comment{$|\mathcal{E}_{\mathrm{cand}}| = u$ candidate ES-output wires}
\State $X \leftarrow \emptyset$ \Comment{Row matrix in $\mathbb{F}_2^{0 \times n}$}
\State $E_{\mathrm{cand}} \leftarrow \emptyset$ \Comment{Row matrix in $\mathbb{F}_2^{0 \times u}$}
\State $M_{\mathrm{enc}} \leftarrow \emptyset$ \Comment{Row matrix in $\mathbb{F}_2^{0 \times (n+u)}$}

\Comment{\textbf{Phase 1: Collect $n$ linearly independent valid encoded samples}}
\While{$\operatorname{rank}(X) < n$}
    \State $\mathbf{x} \leftarrow \textsc{ChooseProbeAssignment}(n)$
    \If{$\mathbf{x}$ is linearly independent of the rows of $X$}
        \State $\mathbf{e}_{\mathrm{cand}} \leftarrow \textsc{EvaluateCandidateESOutputs}(C_{\mathrm{locked}},\mathcal{E}_{\mathrm{cand}},\mathbf{x})$
        \State $\mathbf{z} \leftarrow [\,\mathbf{x} \parallel \mathbf{e}_{\mathrm{cand}}\,]$
        \State Append row $\mathbf{x}$ to $X$
        \State Append row $\mathbf{e}_{\mathrm{cand}}$ to $E_{\mathrm{cand}}$
        \State Append row $\mathbf{z}$ to $M_{\mathrm{enc}}$
    \EndIf
\EndWhile

\Comment{\textbf{Phase 2: Prune inactive candidate outputs}}
\State $\mathcal{E}_{\mathrm{act}} \leftarrow \emptyset$
\For{$j = 1$ to $u$}
    \If{column $j$ of $E_{\mathrm{cand}}$ is not identically zero}
        \State $\mathcal{E}_{\mathrm{act}} \leftarrow \mathcal{E}_{\mathrm{act}} \cup \{\mathcal{E}_{\mathrm{cand}}[j]\}$
    \EndIf
\EndFor
\State $m \leftarrow |\mathcal{E}_{\mathrm{act}}|$
\State $E \leftarrow \textsc{RestrictColumns}(E_{\mathrm{cand}},\mathcal{E}_{\mathrm{act}})$
\State $M_{\mathrm{act}} \leftarrow [\,X \mid E\,]$

\Comment{\textbf{Phase 3: Recover null-space constraints on active coordinates}}
\State $\mathcal{N} \leftarrow \textsc{RightNullSpace}(M_{\mathrm{act}})$
\Comment{Compute the right null space over $\mathbb{F}_2$ by Gaussian elimination}
\State Form matrix $B \in \mathbb{F}_2^{m \times (n+m)}$ whose \emph{rows} are a basis of $\mathcal{N}$
\State Partition $B = [\,B_x \mid B_e\,]$, where $B_x \in \mathbb{F}_2^{m \times n}$ and $B_e \in \mathbb{F}_2^{m \times m}$

\Comment{\textbf{Phase 4: Reconstruct the encoding matrix}}
\State \textbf{assert} $\operatorname{rank}(B_e)=m$
\Comment{For active valid samples of the form $\mathbf{z}=[\,\mathbf{x}\parallel A\mathbf{x}\,]$, $B_e$ is nonsingular}
\State $A \leftarrow B_e^{-1} B_x \pmod 2$

\State \Return $\mathcal{E}_{\mathrm{act}}, A$
\end{algorithmic}
\end{algorithm}

\subsection{Black-Box Security Refutation for Without-ES Variant}
\label{sec:black-box-attack}
For the SCONE variant without the ES realized in hardware, the
locked circuit exposes only the encoded interface, and the explicit
relation in Eq.~\ref{eq:lin_encoding} is not present inside the netlist. In this setting, our
black-box analysis is restricted to the effective dimensionality of the
valid encoded-input space. We do not claim exact reconstruction of
the hidden encoding matrix from PI--PO access alone.

Still, our refutation is directly tied to SCONE's stated security rationale:
we ask whether the nominal expansion from $n$ to $n+m$ inputs
corresponds to $m$ additional independent brute-force dimensions.
Proposition~\ref{prop:dim_c} already answers this question negatively for the valid
encoded-input set. Specifically, valid operation is restricted to the
code space $\mathcal{C}$, whose dimension is $n$, not $n+m$.

To make this precise, let
\[
M_{\mathrm{enc}} \in \mathbb{F}_2^{k\times (n+m)}
\]
be a matrix whose rows are valid encoded vectors sampled from the
SCONE encoding. Since every row of $M_{\mathrm{enc}}$ lies in
$\mathcal{C}$, we have
\[
\operatorname{rank}(M_{\mathrm{enc}}) \le n.
\]
With sufficiently many independent samples drawn from the valid
encoded-input manifold,
\[
\operatorname{rank}(M_{\mathrm{enc}})=n.
\]
Hence, the dimension collapse is
\[
(n+m)-\operatorname{rank}(M_{\mathrm{enc}})=m.
\]
This is the relevant black-box conclusion. Even though the interface
visibly contains $n+m$ coordinates, the defender's own construction
restricts valid operation to an $n$-dimensional subset. Therefore, the
additional $m$ encoded coordinates cannot be counted as $m$
independent brute-force dimensions.

For an interpretation, consider that the null space of $M_{\mathrm{enc}}$ captures linear constraints that are
satisfied by all valid encoded vectors. Any non-zero null-space vector
therefore certifies that the encoded-input manifold is lower-dimensional
than the ambient $(n+m)$-dimensional interface. In the black-box
setting, we use this null-space structure as evidence of hidden linear
dependence and reduced effective dimensionality.

Algorithm~2 formalizes this black-box verification procedure for the
without-ES variant setting. It assumes access to valid encoded-input samples and
incrementally forms a row matrix
\[
M_{\mathrm{enc}} \in \mathbb{F}_2^{k\times W},
\]
where $W$ is the nominal encoded-interface width. Gaussian
elimination over $\mathbb{F}_2$ is then used to track the rank of
$M_{\mathrm{enc}}$ as additional valid samples are observed. Once the
rank saturates for $\tau$ consecutive samples, the recovered value $d$
is taken as the empirical dimension of the sampled valid encoded-input
space, and the quantity $W-d$ measures the number of redundant
interface dimensions.

Algorithm~2 is a conditional verification algorithm: given valid encoded-input observations, it determines whether they span the full nominal interface or only a lower-dimensional subset. Accordingly, its output is a dimensionality certificate for the valid encoded-input space, not a reconstruction of the hidden encoding rule.

\begin{algorithm}[t]
\caption{Black-Box Verification of Encoded-Input Dimension Collapse}
\label{alg:black_box_refutation}
\begin{algorithmic}[1]
\Require Black-box access to valid encoded-input samples, nominal interface width $W$, saturation threshold $\tau$
\Ensure Empirical dimension $d$ of the sampled valid encoded-input space and redundancy estimate $m_{\mathrm{red}} = W-d$

\State $M_{\mathrm{enc}} \leftarrow \emptyset$ \Comment{Row matrix in $\mathbb{F}_2^{0 \times W}$}
\State $d \leftarrow 0$
\State $t \leftarrow 0$ \Comment{Consecutive samples without rank increase}

\Comment{\textbf{Phase 1: Sample the valid encoded-input space}}
\While{$t < \tau$}
    \State $\mathbf{z} \leftarrow \textsc{SampleValidEncodedVector}()$
    \Comment{$\mathbf{z} \in \mathbb{F}_2^W$}
    \State $d_{\mathrm{new}} \leftarrow \operatorname{rank}\!\left(
    \begin{bmatrix}
    M_{\mathrm{enc}}\\
    \mathbf{z}^T
    \end{bmatrix}
    \right)$
    \Comment{Rank computed over $\mathbb{F}_2$ via Gaussian elimination}
    
    \If{$d_{\mathrm{new}} > d$}
        \State Append row $\mathbf{z}$ to $M_{\mathrm{enc}}$
        \State $d \leftarrow d_{\mathrm{new}}$
        \State $t \leftarrow 0$
    \Else
        \State $t \leftarrow t + 1$
    \EndIf
\EndWhile

\Comment{\textbf{Phase 2: Quantify dimension collapse}}
\State $m_{\mathrm{red}} \leftarrow W - d$
\Comment{Observed redundant interface dimensions}

\State \Return $d, m_{\mathrm{red}}$
\end{algorithmic}
\end{algorithm}

\subsubsection{Algorithm~2 Complexity}
Algorithm~2 is also polynomial-time in the number of collected samples and the
nominal encoded-interface width $W$. Let $k$ denote the number of sampled valid
encoded-input vectors observed before the rank saturates for $\tau$
consecutive samples. Each iteration appends one candidate sample and updates the
rank of a matrix in $\mathbb{F}_2^{k \times W}$ using Gaussian elimination, so
the dominant cost is rank maintenance over $W$ columns. In a dense
implementation, the overall runtime is
\[
T_{\mathrm{Alg2}} = O(kW^2),
\]
or equivalently $O(kW^2+\tau W^2)$ if the final saturation window is written
explicitly. The key point is that Algorithm~2 does not attempt any brute-force
enumeration over the nominal $2^W$ interface space; it only estimates the rank
of the sampled valid encoded-input set. Hence, the black-box verification cost
is polynomial in the interface width and the number of collected valid samples.

\subsection{Security Implications}
\label{subsec:security_implication}

The significance of the above analysis is not merely representational. SCONE's brute-force argument is predicated on the assumption that increasing interface width from $n$ to $n+m$ increases the number of independent attacker choices. Our results show that this assumption is invalid whenever the additional inputs are deterministic linear functions of the original inputs. In that case, the apparent interface expansion creates redundancy rather than entropy. Moreover, the assumption of a hidden or obfuscated encoding scheme (without-ES) as an additional security layer contradicts Kerckhoffs’s principle, which asserts that a system should remain secure even if all aspects of its design, except the key, are publicly known.

\section{Experimental Validations}
\label{sec:Exp}

\subsection{White-Box Recovery for With-ES variant}

\subsubsection{Setting}

We evaluate the proposed white-box recovery attack for the SCONE
implementation of with-ES variant on two benchmark families: ISCAS-85 and ITC-99.
For each benchmark, we generated SCONE-locked instances with a hardware-implemented with-ES variant~\cite{han2025scone,saha2025lockforge}.
The original PI count $n$ was obtained directly
from the benchmark interface, while the planted added-input count $m$ was swept across multiple benchmark-specific values to study recovery behavior under increasing encoded-interface expansion.

For each generated instance, we applied the white-box recovery
procedure of Algorithm~1 directly to the locked netlist and recorded the
planted added-input count $m$, the recovered count $\hat{m}$, the
nominal encoded-interface width $W=n+m$, the rank of the recovered
encoding matrix $\operatorname{rank}(A)$, and the total recovery
runtime. All experiments were executed using our Python-based prototype
on a standard desktop workstation. Since the present subsection focuses
only on the white-box attack, oracle interaction is not required here.
Likewise, stopping thresholds are not relevant in this setting, because
the attack operates by direct recovery of the implemented ES from the
locked netlist rather than by black-box rank saturation.

\begin{table}[tb]
\footnotesize
\centering
\caption{Summary of with-ES variant white-box recovery outcomes across the tested $m$-sweeps. Exact added-input recovery indicates $\hat{m}=m$ for all tested instances of the benchmark. Exact matrix recovery indicates exact reconstruction of the planted encoding matrix over the same sweep. Runtime trends are shown separately in the corresponding figure.}
\label{tab:fig1a_recovery}
\begin{tabular}{llcccc}
\toprule
Family & Circuits & $n$ & \makecell[c]{Tested \\ $m$ val.} & \makecell[c]{Added Inp. \\ Recov.?} & \makecell[c]{Matrix \\ Recov.?} \\
\midrule
\multirow{4}{*}{ISCAS-85}
& c432  & 36   & $\{4,8,16,20\}$       & \checkmark & \checkmark \\
& c6288 & 32   & $\{4,8,16,20\}$       & \checkmark & \checkmark \\
& c5315 & 178  & $\{10,20,40,80\}$     & \checkmark & \checkmark \\
& c7552 & 207  & $\{25,50,75,100\}$    & \checkmark & \checkmark \\
\midrule
\multirow{6}{*}{ITC-99}
& b15   & 485  & $\{50,100,150,200\}$  & \checkmark & \checkmark \\
& b17   & 726  & $\{100,200,300,400\}$ & \checkmark & \checkmark \\
& b18   & 3357 & $\{100,200,300,400\}$ & \checkmark & \checkmark \\
& b19   & 6666 & $\{100,200,300,400\}$ & \checkmark & \checkmark \\
& b21   & 522  & $\{100,200,300,400\}$ & \checkmark & \checkmark \\
& b22   & 767  & $\{100,200,300,400\}$ & \checkmark & \checkmark \\
\bottomrule
\end{tabular}
\end{table}

\subsubsection{Results}

Table~\ref{tab:fig1a_recovery} summarizes the white-box recovery
outcomes across the evaluated ISCAS-85 and ITC-99 benchmarks. For every
tested instance, the attack recovered the exact number of added encoded
inputs, i.e., $\hat{m}=m$, and reconstructed the planted encoding
matrix exactly. In all cases, the recovered matrix also satisfied
$\operatorname{rank}(A)=m$, confirming full-row-rank recovery of the
implemented linear encoding structure throughout the tested sweeps.

\begin{figure*}[tb]
    \centering
    \includegraphics[width=.94\textwidth, trim=0cm 0.0cm 0cm 0.0cm, clip]{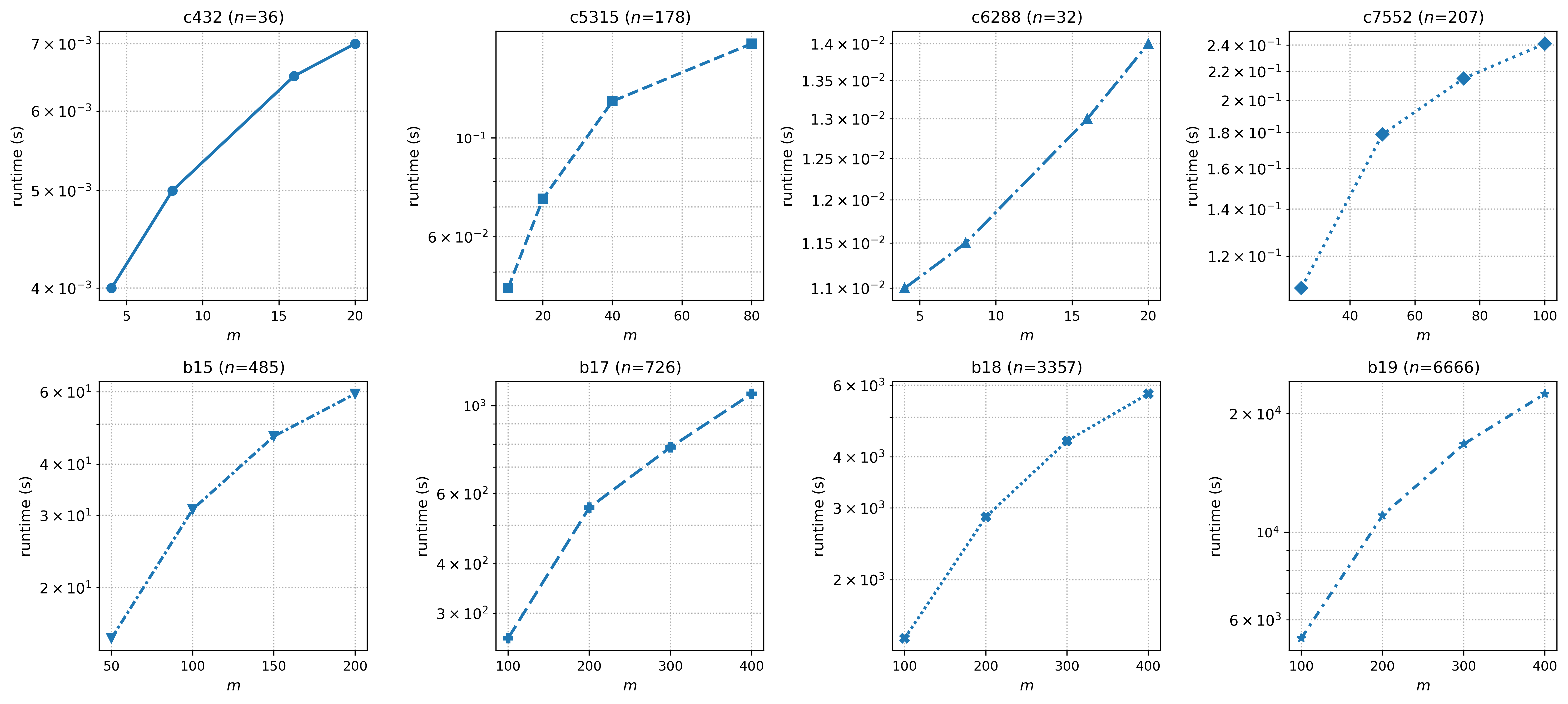}
    \caption{Per-benchmark runtime of the with-ES variant white-box recovery attack as a function of the planted added-input count $m$. Each subplot corresponds to one benchmark and shows the runtime trend over the tested sweep.}
    \label{fig:fig1a_runtime_small_multiples}
\end{figure*}

Figure~\ref{fig:fig1a_runtime_small_multiples} reports the recovery runtime for each benchmark as a function of the planted added-input count $m$. The results show the expected increase in runtime with growing benchmark size and encoded-interface width. The smaller ISCAS-85 circuits complete in milliseconds, while the larger ITC-99 circuits incur substantially higher recovery cost as both $n$ and $m$ increase. The largest evaluated instance is b19 at the largest tested sweep point, with runtime approximately $2.25\times 10^4$~s. Despite this increase, recovery remains exact over all evaluated instances.

Figure~\ref{fig:fig1a_sweep_coverage} shows the tested redundancy regimes in terms of the normalized ratio $m/n$. It indicates that exact recovery is observed not only across different absolute values of $m$, but also across a broad range of encoded-input expansions relative to the native PI dimension. Accordingly, the observed white-box vulnerability is not confined to a narrow operating region, but persists across substantially different benchmark scales and
redundancy regimes.

\begin{figure}[tb]
    \centering
\includegraphics[width=0.99\columnwidth,height=8cm,keepaspectratio]{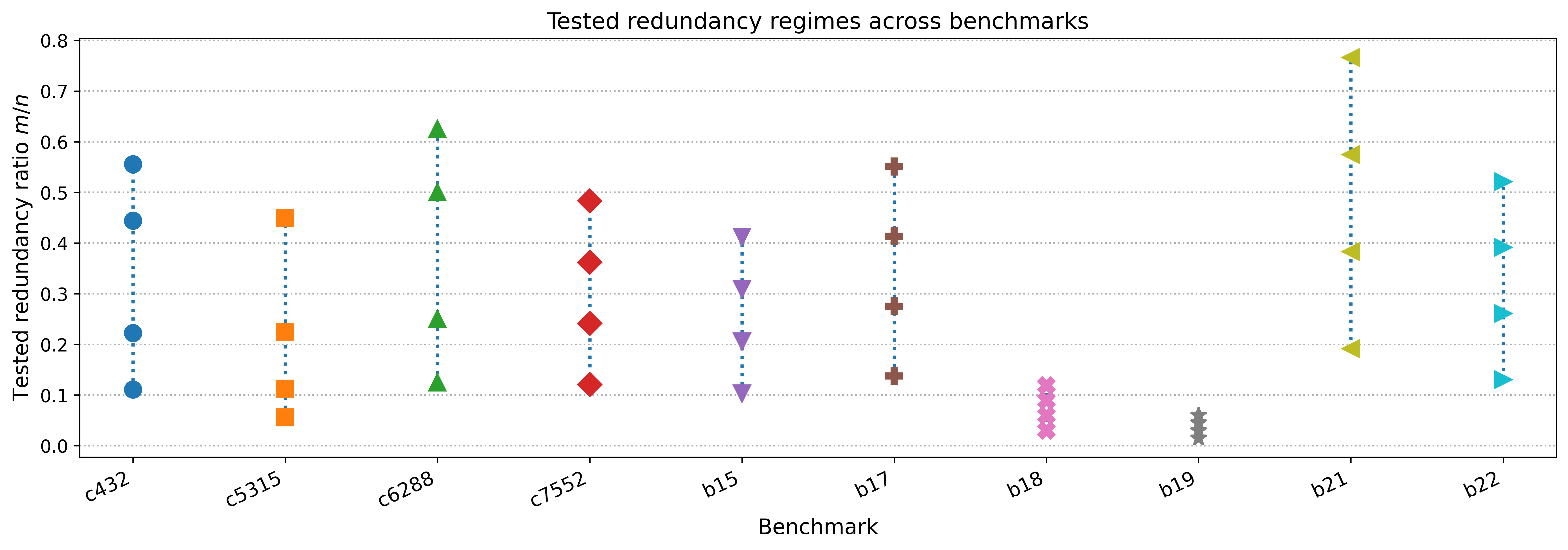}
    \caption{Tested redundancy regimes for with-ES variant, reported as the normalized ratio $m/n$ for each benchmark.}
    \Description{A benchmark-wise plot of normalized redundancy ratio m over n for the evaluated circuits. Each benchmark is shown on the horizontal axis, and markers indicate the tested m over n values for that circuit.}
    \label{fig:fig1a_sweep_coverage}
\end{figure}

\subsubsection{Case Study on ARM Cortex-M0}
We also conducted the white-box attack on an ARM Cortex-M0 processor~\cite{arm_cortex_m0}, a widely used processor in various microcontrollers~\cite{stm32f0_series, nxp_lpc1100}. The original RTL
consists of both sequential and combinational logic. To make the design suitable for LL, all flip-flop outputs are treated as primary inputs, and their inputs are treated as primary outputs.\footnote{%
To enable this conversion, logic synthesis is performed using Globalfoundry 55nm technology, without any optimization switches using the \texttt{exact\_map} option in the Synopsys Design Compiler \texttt{compile}
command. The resulting netlist is then modified to expose all flip-flop outputs as primary inputs and all flip-flop inputs as primary outputs, after which LL is applied to the design. Following LL best practices, the design can be transformed back to its original sequential form with the same set of inputs, outputs, and flip-flops, with portions of the combinational logic now logic-locked. One can perform the logic synthesis optimization on this netlist.}
This converted netlist 
contains 1020
inputs, 1746 outputs, and 15{,}308 logic assignments.

The attack achieved full recovery of the extra encoded inputs in both
evaluated settings: for the instance with $m=200$, it recovered all
200 added inputs in 109.63\,s, and for the instance with $m=400$, it
recovered all 400 added inputs in 226.42\,s.
These results
show that exact recovery remains feasible on a substantially larger processor-scale design.  As expected, runtime increases with the number of added encoded inputs, but the recovery remains exact in both cases.

\subsubsection{Summary}
Overall, the with-ES variant experiments establish the central white-box
vulnerability of the hardware-implemented ES. Across all evaluated
benchmarks, the attack recovers both the exact number of added encoded
inputs and the planted linear relation implemented by the encoder. As a
result, the nominal interface expansion from $n$ to $n+m$ does not
yield $m$ additional independent entropy-bearing input dimensions.
Instead, the added encoded coordinates remain deterministic functions of
the original PI vector and are directly recoverable from the locked
netlist.

\subsection{Black-Box Analysis for Without-ES Variant}

\subsubsection{Setting}
Section~\ref{sec:black-box-attack} shows that valid encoded inputs are confined to an $n$-dimensional subset of the ambient $(n+m)$-dimensional interface and therefore do not provide $m$ additional independent brute-force dimensions. The purpose of the following analysis is to evaluate the practical implications of this result in the black-box setting. Specifically, we study (i) the cost of obtaining valid encoded-input samples from the exposed $(n+m)$-input interface and (ii) whether, once such samples are available, Algorithm~2 saturates at the original PI dimension $n$ rather than the nominal width $W=n+m$.

Note that our black-box claim is narrower than the white-box one: we do not claim recovery of the hidden encoding from PI-PO access. Instead,
we ask whether operation over the visible interface behaves as an $(n+m)$-dimensional search space or remains confined to the $n$-dimensional manifold predicted by Proposition~\ref{prop:dim_c}.

To evaluate this, we generated without-ES locked benchmarks and performed 30 Monte Carlo trials per setting under uniform random sampling of the visible encoded interface. In each trial, the candidate input was drawn over full $W$-bit interface and labeled valid only for offline ground-truth evaluation using the planted generation rule. This rule is assumed to be unknown to the attacker. For each trial, we recorded the number of sampled inputs required to observe the first valid encoded-input sample and, when enough valid samples were obtained, the number required for Algorithm~2 to reach rank saturation. We also recorded the recovered sampled dimension $\hat d$ and the redundancy estimate $W-\hat d$. Here, ``enough valid samples'' means a sufficient number of linearly independent valid encoded input vectors for sampled rank to stabilize, as in Algorithm~2. Thus, the experiment measures the practical cost of exposing lower-dimensional valid manifold predicted by theory, not the recovery of the hidden encoding itself.

\subsubsection{Results}

\begin{figure}[tb]
    \centering
\includegraphics[width=.9\columnwidth,height=9cm,keepaspectratio]{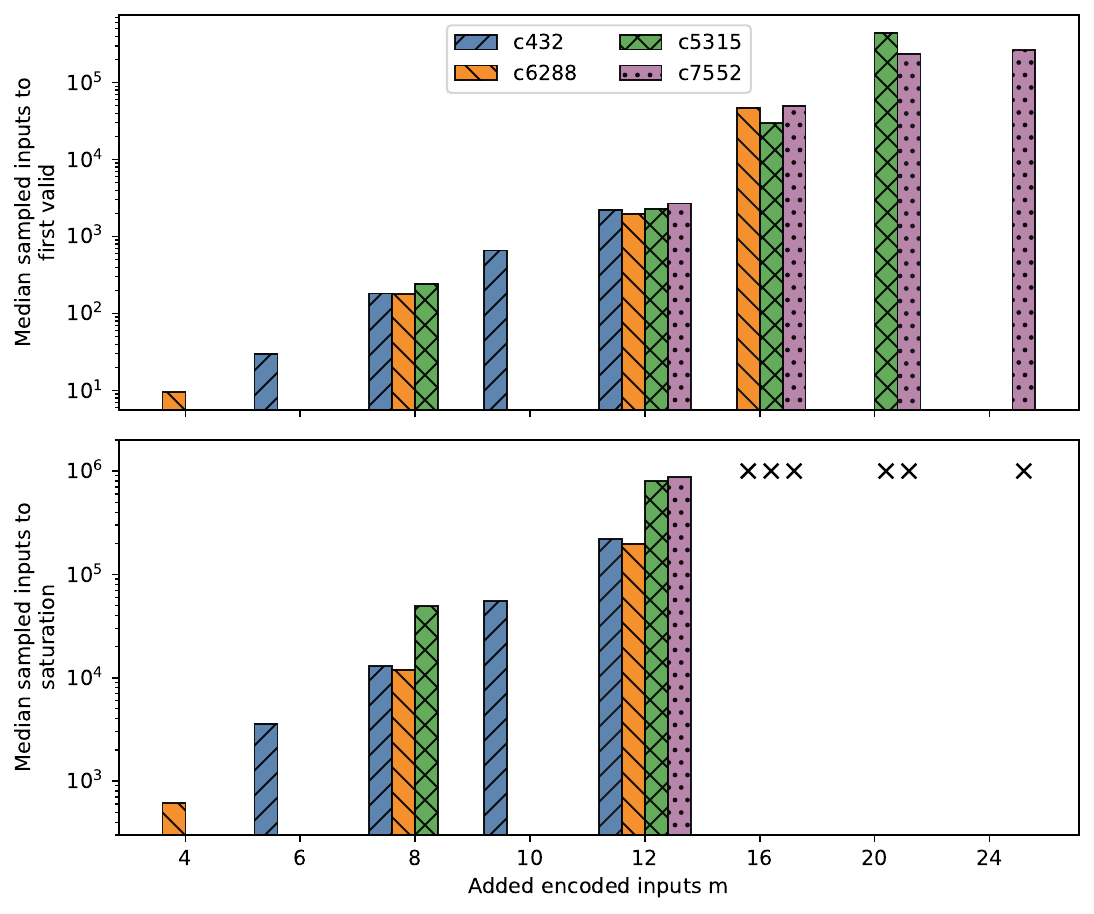}
    \caption{Black-box sampling cost for without-ES variant. As the added input count $m$ increases, the number of sampled inputs required to obtain a valid encoded-input sample rises sharply. The lower plot shows the corresponding cost of reaching rank saturation in successful settings. Here, $\times$ denote settings that did not reach saturation within the fixed sampling budget of $10^6$ sampled inputs.}
    \label{fig:blackbox-combined}
\end{figure}

Table~\ref{tab:blackbox-results} and
Figure~\ref{fig:blackbox-combined} summarizes the results. The upper
plot of Figure~\ref{fig:blackbox-combined} shows the median number of
sampled inputs required to obtain the first valid encoded input
sample. Across ISCAS-85 benchmarks this acquisition cost
increases sharply with $m$, indicating that valid encoded vectors
become progressively harder to encounter as the nominal interface
widens. The lower panel shows the median number of sampled inputs
required for Algorithm~2 to reach rank saturation in successful
settings. Bars marked by $\times$ denote settings that did not reach
saturation within the fixed sampling budget and therefore have no
reported saturation median; the corresponding table entries are shown
by a hyphen.

In all successful cases, the recovered dimension equals the
original PI count, i.e., $\hat d=n$, and the observed redundancy
satisfies $W-\hat d=m$. This is exactly the black-box consequence
predicted by Proposition~\ref{prop:dim_c} and used by Algorithm~2. Hence, whenever
enough valid encoded-input samples are obtained, the sampled
encoded-input space saturates at $n$, not at the nominal width
$W=n+m$.

The failed-saturation settings are also informative. For c6288 at $m=16$, c5315 at $m\in{16,20}$, and c7552 at $m\in{16,20,24}$, valid samples are observed, but saturation is not reached within sampling budget ($10^6$). Thus, in without-ES variant setting, the practical
bottleneck is valid-sample acquisition rather than rank computation.

\subsubsection{Summary}
Overall, these black-box experiments complement the theory in Section~\ref{sec:black-box-attack} by showing that the apparent expansion from $n$ to
$n+m$ inputs manifests in practice as increased sparsity of valid encoded vectors, not as $m$ additional independent brute-force dimensions.
That is, whenever saturation is reached, the sampled valid space saturates at dimension $n$, so the effective valid-space complexity remains governed by the original PI dimension rather than the nominal width $W=n+m$.

\begin{table}[tb]
\caption{Summary of black-box without-ES variant outcomes over 30 Monte Carlo
trials per setting. Here, \emph{sat. success} denotes the fraction of
trials reaching saturation within the budget, $\hat d$ is the
recovered sampled dimension for successful settings, and
$W-\hat d$ is the corresponding observed redundancy gap. Probe-cost
trends are shown separately in Figure~\ref{fig:blackbox-combined}.}
\label{tab:blackbox-results}
\centering
\begin{tabular}{llccc}
\toprule
Circuit & Setting & $W$ & sat. success & $\hat d$, $W-\hat d$ \\
\midrule
\multirow{4}{*}{c432}
  & $m=6$  & 42  & \checkmark & 36, 6  \\
  & $m=8$  & 44  & \checkmark & 36, 8  \\
  & $m=10$ & 46  & \checkmark & 36, 10 \\
  & $m=12$ & 48  & \checkmark & 36, 12 \\
\midrule
\multirow{4}{*}{c5315}
  & $m=8$  & 186 & \checkmark & 178, 8  \\
  & $m=12$ & 190 & \checkmark & 178, 12 \\
  & $m=16$ & 194 & -- & NA, NA \\
  & $m=20$ & 198 & -- & NA, NA \\
\midrule
\multirow{4}{*}{c6288}
  & $m=4$  & 36  & \checkmark & 32, 4  \\
  & $m=8$  & 40  & \checkmark & 32, 8  \\
  & $m=12$ & 44  & \checkmark & 32, 12 \\
  & $m=16$ & 48  & -- & NA, NA \\
\midrule
\multirow{4}{*}{c7552}
  & $m=12$ & 219 & \checkmark & 207, 12 \\
  & $m=16$ & 223 & -- & NA, NA \\
  & $m=20$ & 227 & -- & NA, NA \\
  & $m=24$ & 231 & -- & NA, NA \\
\bottomrule
\end{tabular}
\end{table}

\begin{table}[tb]
\caption{\small Evaluation of SCONE-locked circuits but with a non-linear encoding scheme, against representative structural and I/O-based attacks. None of the
listed attacks, including the vulnerability identified in this paper,
is successful on these instances.}
\label{tab:bfly_inputs}
\centering
\resizebox{0.9\linewidth}{!}
{
\begin{tabular}{c|c|c|c|c|c|c|c}
\hline
\textbf{Circuit} & \textbf{PIP} & \textbf{I/O SAT} & \textbf{SPS} & \textbf{ATR} & \textbf{FALL} & \textbf{SPI} & \textbf{Ours} \\ \hline

\multicolumn{8}{c}{\textbf{ISCAS’85}} \\ \hline
c432  & 38  & T.O. & \xmark & \xmark & \xmark & \xmark & \xmark \\
c1355 & 43  & T.O. & \xmark & \xmark & \xmark & \xmark & \xmark \\
c1908 & 35  & T.O. & \xmark & \xmark & \xmark & \xmark & \xmark \\
c7552 & 126 & T.O. & \xmark & \xmark & \xmark & \xmark & \xmark \\ \hline

\multicolumn{8}{c}{\textbf{ITC’99}} \\ \hline
b11 & 28  & T.O. & \xmark & \xmark & \xmark & \xmark & \xmark \\
b12 & 39  & T.O. & \xmark & \xmark & \xmark & \xmark & \xmark \\
b13 & 25  & T.O. & \xmark & \xmark & \xmark & \xmark & \xmark \\
b14 & 102 & T.O. & \xmark & \xmark & \xmark & \xmark & \xmark \\
b15 & 129 & T.O. & \xmark & \xmark & \xmark & \xmark & \xmark \\
b17 & 117 & T.O. & \xmark & \xmark & \xmark & \xmark & \xmark \\
b20 & 103 & T.O. & \xmark & \xmark & \xmark & \xmark & \xmark \\
b21 & 103 & T.O. & \xmark & \xmark & \xmark & \xmark & \xmark \\
b22 & 103 & T.O. & \xmark & \xmark & \xmark & \xmark & \xmark \\ \hline

\multicolumn{8}{c}{\textbf{Others}} \\ \hline
\multirow{2}{*}{\shortstack{ARM\\Cortex M0}}
& \multirow{2}{*}{1020}
& \multirow{2}{*}{T.O.}
& \multirow{2}{*}{\xmark}
& \multirow{2}{*}{\xmark}
& \multirow{2}{*}{\xmark}
& \multirow{2}{*}{\xmark}
& \multirow{2}{*}{\xmark} \\
& & & & & & & \\ \hline
\end{tabular}
}
\end{table}

\section{Potential Countermeasure}
\label{sec:Dis}
The core weakness of the current SCONE construction is that each added
encoded input is implemented as a deterministic linear function of the
original PI vector.
Introducing non-linear logic, such as AND/OR-based mixing or other higher-order Boolean dependencies, would break this linear
recoverability.

Such non-linear encoding would prevent the white-box attack on the encoder being represented as a
single matrix over $\mathbb{F}_2$. Consequently, the attacker would
face a substantially harder function-recovery problem than a
direct linear-structure extraction problem.
This difference is also reflected in Table~\ref{tab:bfly_inputs}, where a simple non-linear encoding\footnote{%
$in^{encoded}_i = in_{j_1} \oplus in_{j_2} \oplus \cdots \oplus in_{j_w} \oplus (in_{j_1} \land in_{j_2}),\; i \in [n+1, n+m] \cap \mathbb{N}$, where $j_1,\dots,j_w \in \{1,\dots,n\}$ are distinct indices selected uniformly at random and fixed thereafter.}
does not exhibit the vulnerabilities
identified in this paper and is \textit{actually} secure against the 
structural and I/O-based attacks which SCONE claims resilience against. While non-linearity alone is
not a complete proof, it is a principled first step toward
mitigating the vulnerability exposed.

Regarding the black-box vulnerability for the without-ES variant, our dimensionality-collapse result is specific to the linear encoding as well. Thus, replacing the ES with a non-linear one removes the grounds upon which the rank-based analysis relies.
  
\section{Conclusion}
\label{sec:Conclu}
This paper independently revisited the security claims of SCONE and showed that they
critically depend on how the encoding is realized. We analyzed both SCONE realizations, validated the theoretical
security analysis experimentally, and outlined a lightweight
non-linear mitigation that avoids the identified weakness.
Overall, this work highlight that nominal
interface expansion alone is not a sufficient basis for security in SCONE-like logic locking. For future work, we aim to develop formal design criteria for secure encoding constructions and evaluate their security-cost tradeoffs under broader attack models.


\bibliographystyle{ACM-Reference-Format}
\bibliography{bibliography}

\end{document}